\begin{document}
\newcommand{\reffig}[1]{Fig. \ref{#1}}
\newcommand{\figref}[1]{\figurename~\ref{#1}}

\newtheorem{definition}{Definition}
\newtheorem{theorem}{Theorem}
\newtheorem{corollary}{Corollary}
\newtheorem{proposition}{Proposition}
\newtheorem{lemma}{Lemma}
\newtheorem{remark}{Remark}
\renewcommand{\algorithmicrequire}{\textbf{Input:}}  
\renewcommand{\algorithmicensure}{\textbf{Output:}}  
\renewcommand{\thefootnote}{}


\title{Joint Tx/Rx  Energy-Efficient Scheduling  in Multi-Radio  Networks: A Divide-and-Conquer Approach }

\author{\IEEEauthorblockN{Qingqing Wu, Meixia Tao, and Wen Chen}
\IEEEauthorblockA{Department of Electronic Engineering, Shanghai Jiao Tong University, Shanghai, China.\\ Emails: \{wu.qq,mxtao,wenchen\}@sjtu.edu.cn.}}

\maketitle
\begin{abstract}
Most of the existing works on energy-efficient wireless communication systems only consider  the transmitter (Tx) or the receiver (Rx) side power consumption but not both. Moreover, they often assume the static circuit power consumption. To be more practical, this paper considers the joint Tx and Rx power consumption  in multiple-access radio networks, where the power model takes both the transmission power and the dynamic circuit power into account. We  formulate the joint Tx and  Rx energy efficiency (EE) maximization problem which is
 a combinatorial-type  one due to the indicator function for scheduling users and activating radio links. The link EE and the user EE are then introduced  which have the similar structure as the system EE. Their  hierarchical  relationships are exploited to tackle the problem using a divide-and-conquer approach, which is only of  linear complexity.
We further  reveal  that the static receiving power plays a critical  role in the user scheduling. 
  Finally, comprehensive numerical  results are provided to validate the theoretical findings and demonstrate the  effectiveness of the proposed algorithm for improving the system EE. \footnote {This work is supported by the National 973 Project \#2012CB316106, by NSF China  \#61322102,  \#61161130529, and \#61328101, by the STCSM Science and Technology Innovation Program \#13510711200, by the SEU National Key Lab on Mobile Communications \#2013D11.  Wen Chen is also with the School of Electronic Engineering and Automation, Guilin University of Electronic Technology.}
\end{abstract}

\section{Introduction}
The increasing number of new wireless access devices and various  services lead to a significant increase in the demand for  higher user data rate. While the higher energy consumption is a great concern as well for future wireless communication systems. Recently, there has been an upsurge of interest in the energy efficiency (EE) optimization field.
Basic concepts of energy-efficient communications are  introduced in \cite{wu2012green} and several advanced physical layer techniques for EE are studied in \cite{xiao2013qos,miao2013energy1,ng2012energy2,cheung2012achieving,zhang2011energy}.

However, all the above works only consider one side power consumption, i.e., either the transmitter (Tx) or the receiver (Rx) side.
In fact, the expectation of  limiting  electric expenditure and reducing carbon emissions requires the base station  to perform in an energy-efficient manner \cite{wu2012green}, while  minimizing the user side energy consumption also deserves more efforts due to  capacity limited batteries and user experience requirements \cite{kim2010leveraging,luo2013joint}. Moreover,
according to \cite{auer2010d2},  the techniques adopted to improve the EE of one end of the communication system may adversely affect the EE of the other end. Therefore, it is necessary to consider the joint Tx and Rx EE optimization,  which shall provide  more flexibility for the energy saving at the side interested or both.

For EE  oriented research, one of the most important tasks is to quantify the power consumption of the communication system \cite{auer2010d2}.
 Most of the existing works only consider a constant circuit power so as to simplify the system analysis and make the problem more tractable.
 However, it has been reported in \cite{wu2012green,gruber2009earth} that a rough modeling for the power consumption   can not  reflect the true behaviour of
wireless devices and thus might provide  misleading conclusions. Therefore, the power consumption modeling should not only capture the key
system components but also characterize the reality \cite{wu2012green}.

The main contributions of this paper are summarized as follows:
  1) We formulate the joint Tx and Rx EE maximization problem in which  the link dependent signal processing power, the static circuit power as well as the
transmission power  are considered based on a comprehensive study  \cite{auer2010d2}, while the power model of existing works
    \cite{xiao2013qos,miao2013energy1,cheung2012achieving,ng2012energy2
,xu2013energy,mao2013energy} are basically special cases.
2)  We explore the fractional structure of the system EE and introduce the concept of the  individual EE, i.e., the link EE and the user EE. Based on these,  an optimal approach of  linear complexity is proposed to solve the non-convex EE maximization problem. Moreover, this approach can also be used to optimally solve the problem in \cite{miao2013energy1} where only a quadratic complexity method is proposed.
3) We reveal that the static receiving power has an implicit interpretation of the optimal number of scheduled users.  In the extreme case when the static receiving power is negligible, time division multiplexing access (TDMA) is optimal for the energy-efficient transmission.
\section{System Description  and Problem Formulation}
\subsection{System Model}
 Consider a multi-user multi-radio  network, where  $K$ users  are communicating with one access point (AP) over $M$ orthogonal radio links simultaneously.
  It is assumed that each user $k$, for $k=1,...,K$, is assigned prior with a fixed subset of radio links, denoted as $\mathcal{M}_k$,  and  that the radio links of different users do not overlap with each other so as to void interference, i.e., $\mathcal{M}_k\bigcap \mathcal{M}_m ={\O}$. The multiple radio links can be formed by orthogonal multiplexing techniques, such as frequency division multiplexing.
The channel between the AP and each user is assumed to be quasi-static fading and they are all  equipped with one antenna.
It is assumed that the perfect and global channel state information (CSI) of all users is available for the AP, which allows us to do energy-efficient scheduling.
The channel gain of user $k$ over  link $i$ and the corresponding power allocation of this link are  denoted as $g_{k,i}$ and $p_{k,i}$, respectively. The receiver noise is modelled as a circularly symmetric complex Gaussian random variable with zero mean and variance $\sigma^2$ for all links. Then the data rate of user $k$ over  link $i$, denoted as $r_{k,i}$, can be expressed as
\begin{equation}\label{eq1}
r_{k,i}=B\log_{2}\left(1+\frac{p_{k,i}g_{k,i}}{\Gamma\sigma^2}\right),
\end{equation}
where $B$ is the bandwidth of each radio link and $\Gamma$ characterizes the gap between  actual achievable rate and channel capacity due to a practical modulation and coding design \cite{miao2013energy1}. Consequently, the overall system data rate can be expressed as
\begin{equation}\label{eq2}
R_{\rm{tot}}=\sum^{K}_{k=1}\omega_k R_k=\sum^{K}_{k=1}\omega_k\sum_{i\in \mathcal{M}_k} r_{k,i},
\end{equation}
where $R_k$ is  the data rate of user $k$ and $\omega_k$ which is provided by upper layers, represents the  priority of user $k$.

\subsection{Joint Tx/Rx Power Consumption Model }
In this work, we adopt the power consumption model from \cite{auer2010d2} published by Energy Aware Radio and neTwork tecHnologies (EARTH)  project, which provides a comprehensive  characterization of the power consumption for each component involved in the communication.

At the user side, the power dissipation consists of two parts, i.e., the transmission power and the circuit power. Denote $P_{Tk}$ as the overall transmission power of user $k$ and it is given by
\begin{equation}\label{eq3}
P_{Tk}=\frac{\sum_{i\in \mathcal{M}_k} p_{k,i}}{\xi},
\end{equation}
 where $\xi\in(0,1]$ is a constant which accounts for the efficiency of the power  amplifier.
Denote $P_{Ck}$ as the circuit power of user $k$. According to \cite{auer2010d2}, the circuit power of each device contains a dynamic part for the signal processing which linearly scales with the number of active links, and a static part independent of links for other circuit blocks, i.e.,
\begin{equation}\label{eq4}
P_{Ck}(n_k^o)=n_k^oP_{{\rm{dyn}}, k}+\mathcal{I}(n_k^o)P_{{\rm{sta}}, k},
\end{equation}
where $n_k^o$ is the number of active  links and can be expressed as
$n_k^o=\sum_{i\in \mathcal{M}_k} \mathcal{I}(p_{k,i})$. Here,
the indicator function $\mathcal{I} (x)$ is defined as
\begin{equation}\label{eq5}
\mathcal{I}(x) =\left\{
\begin{array}{lcl}
1,& \text{if}~ x>0,\\
0,& \text{otherwise}.
\end{array}\right.
\end{equation}
 Specifically, if $p_{k,i}>0$, then $ \mathcal{I}(p_{k,i})=1$ means that  link $i$ is active, and if $n_k^o>0$, then $ \mathcal{I}(n_k^o)=1$ means that user $k$ is scheduled.
 In (\ref{eq4}), $P_{{\rm{dyn}}, k}$  and $P_{{\rm{sta}}, k}$ are dynamic and static components of the  circuit power for user $k$, respectively.
  Considering different types of terminals  in practical systems,  $P_{{\rm{dyn}}, k}$  and $P_{{\rm{sta}},k}$ can be different  for different user $k$. Now, the overall power consumption of user $k$, denoted as $P_k$, is
\begin{equation}\label{eq6}
P_k=P_{Tk}+P_{Ck}(n_k^o).
\end{equation}

 At the AP side,  the receiving circuit power consumption also consists of two similar parts as the user device \cite{kim2010leveraging,auer2010d2}. Denote $P_{\rm{dyn, 0}}$ and $P_{\rm{sta,0}}$ as the  dynamic  and static receiving circuit power, respectively.  Then the overall power consumption at the AP side can be expressed as
\begin{equation}\label{eq7}
P_{0}=\sum_{k=1}^{K}n^o_kP_{{\rm{dyn}},0}+P_{{\rm{sta}},0}.
\end{equation}

Finally, the overall  power consumption of the  system can be expressed as
\begin{align}\label{eq8}
P_{\rm{tot}} =\sum^{K}_{k=1} P_k+P_0.
\end{align}
\subsection{Problem Formulation}
Energy efficiency is commonly defined by the ratio of the overall system rate $R_{\rm{tot}}$ over the overall system power consumption $P_{\rm{tot}}$ \cite{xiao2013qos,miao2013energy1,cheung2012achieving}.
  Our goal  is to jointly optimize the  user scheduling, the link activation and the power control to maximize the  EE of the considered system.
Mathematically, we can formulate the EE optimization problem  as (P1)
\begin{align}\label{eq10}
\mathop {\max }\limits_{\bm{p}} &~~\frac{\sum_{k=1}^{K} \omega_k\sum_{i\in \mathcal{M}_k} B\log_{2}\left(1+\frac{p_{k,i}g_{k,i}}{\Gamma\sigma^2}\right)}
{\sum^{K}_{k=1} \left( \frac{\sum_{i\in \mathcal{M}_k} p_{k,i}}{\xi}+ P_{Ck}(n_k^o) + n_k^oP_{{\rm{dyn}},0}\right)+ P_{{\rm{sta}},0}} \nonumber  \\
\text{s.t.} \,&~~ n_k^o=\sum_{i\in \mathcal{M}_k} \mathcal{I}(p_{k,i}),  ~~~ 1\leq k \leq K,  i\in \mathcal{M}_k,   \nonumber \\
& ~~0\leq p_{k,i}\leq P^{k,i}_{\mathop{\max}}, ~~~ 1\leq k \leq K,  i\in \mathcal{M}_k, 
\end{align}
where $\bm{p} \triangleq$ \{$p_{k,i}|k=1,2,...K;i\in \mathcal{M}_k$\}. For practical consideration, we assume that each radio link $i$ of user $k$ has a maximum allowed transmit power $P^{k,i}_{\mathop{\max}}$. Note that the authors  in \cite{miao2013energy1} consider a similar problem formulation  to (\ref{eq10}) but without power constraint which is thereby  a special case of this paper.

The existence of the two layered indicator functions, i.e., $\mathcal{I}(p_{k,i})$ and $\mathcal{I}(n_k^o)=\mathcal{I}\left(\sum_{i\in \mathcal{M}_k} \mathcal{I}(p_{k,i})\right)$ makes the objective function discontinuous and hence non-differentiable. The global optimal solution of (\ref{eq10}) is generally difficult to be obtained with an efficient complexity. In the following section, we explore the particular structure of system EE and  show that the global optimal solution can actually be obtained using a divide-and-conquer approach with low complexity.

\section{Energy-Efficient  Scheduling}
In this section, we  solve the system EE maximization problem directly from a fractional-form  perspective. This idea  results from the  connection of the   EE from three levels, namely,  \emph{the link  Energy Efficiency}, \emph{the user Energy Efficiency}, and \emph{the system Energy Efficiency}.
\subsection{Link Energy Efficiency and User Energy Efficiency}

 \emph{Definition 1}  (Link Energy Efficiency): The EE of link $i$ of  user $k$,  for  $i\in \mathcal{M}_k$, $k=1, ..,K$,  is defined as the ratio of  the weighted achievable rate of the user on this link over the consumed power associated with this link, i.e.,
\begin{eqnarray}\label{eq11}
ee_{k,i}=\frac{\omega_kB\log_{2}\left(1+\frac{p_{k,i}g_{k,i}}{\Gamma\sigma^2}\right)}{\frac{p_{k,i}}{\xi}+ P_{{\rm{dyn}},k}+ P_{{\rm{dyn}},0}},
\end{eqnarray}
where  the link-level power consumption counts the transmission power of the user over the link, per-link dynamic circuit power of the user and the AP, respectively.

 It is easy to prove that this fractional type function have the stationary point which is also the optimal point \cite{Boyd}. By setting the derivative of $ee_{k,i}$ with respect to $p_{k,i}$ to zero, we obtain that the optimal power value ${p}_{k,i}$ and the optimal link EE under peak power constraint satisfies
\begin{align}\label{eq13}
p^*_{k,i}=\left[\frac{ B\xi \omega_k}{ ee^*_{k,i}\ln2}-\frac{\Gamma\sigma^2}{g_{k,i}}\right]^{P^{k,i}_{\mathop{\max}}}_0, \forall\, k,   i\in \mathcal{M}_k,
\end{align}
where  $[x]^a_b\triangleq \min\left\{\max\{x,b\},a\right\}$.
Note that $\frac{ B\xi \omega_k}{ ee^*_{k,i}\ln2}>\frac{\Gamma\sigma^2}{g_{k,i}}$, i.e.,
$p^*_{k,i}>0$ always holds for $ee^*_{k,i}$, since otherwise $ee^*_{k,i}$ would be zero.
Based on (\ref{eq11}) and (\ref{eq13}), the numerical values of  ${ee}^*_{k,i}$ and $p_{k,i}^*$  can be easily obtained by the bisection method.

 \emph{Definition 2}  (User Energy Efficiency): The EE of user $k$, for $k=1,...,K$, is defined as the ratio of the weighted total achievable rate of the user on all its preassigned radio links over the total power consumption associated with this user, i.e.,
\begin{align}
EE_{k}=\frac{\omega_k\sum_{i\in\mathcal{M}_k} B\log_{2}\left(1+\frac{p_{k,i}g_{k,i}}{\Gamma\sigma^2}\right)}{ \frac{\sum_{i\in\mathcal{M}_k}  p_{k,i}}{\xi}+n_k^o( P_{{\rm{dyn}}, k}+ P_{{\rm{dyn}},0})+P_{{\rm{sta}}, k}},
\end{align}
where the user-level power consumption counts the total transmission power of the user, the overall circuit power of the user and the dynamic processing power of the AP related to this user.

Now, we find the optimal power control to maximize the user EE. The problem is formulated as
\begin{align}\label{eq14}
\mathop {\max }\limits_{\{p_{k,i}\}} ~~~ &~~~~~~ EE_k  \nonumber \\
\text{s.t.}~~~~~ & n_k^o=\sum_{i\in \mathcal{M}_k} \mathcal{I}(p_{k,i}),  ~~ 1\leq k \leq K,  i\in \mathcal{M}_k,   \nonumber \\
~~~& p_{k,i}\leq P^{k,i}_{\mathop{\max}}, ~~  1\leq k \leq K,  i\in \mathcal{M}_k,   \nonumber \\
~~~& p_{k,i} \geq 0,  ~~~~ 1\leq k \leq K,  i\in \mathcal{M}_k.
\end{align}
Define $\Phi_k$ as the set of active links  for user $k$ and then $n^o_k$ is the cardinality of $\Phi_k$. Given any $\Phi_k$, it is easy to prove that $EE_{k}$ is strictly quasiconcave in ${p_{k,i}}$. Thus, similar to the link EE, the optimal power allocation under set $\Phi_k$  satisfies
\begin{align}\label{eq15}
p_{k,i}=\left[\frac{ B\xi \omega_k}{EE^*_k\ln2}-\frac{\Gamma\sigma^2}{g_{k,i}}\right]^{P^{k,i}_{\mathop{\max}}}_0, \forall\, k,  i\in \Phi_k.
\end{align}
Note that if $p_{k,i}=0$, it suggests that this link should not be active in the optimal solution, but its corresponding circuit power $ P_{{\rm{dyn}}, k}+ P_{{\rm{dyn}},0}$ has already been accounted in calculating the total power consumption in (\ref{eq14}). Therefore, we have to obtain the set $\Phi_k$ in which all radio links are allocated with strictly positive powers  in maximizing $EE_k$.

Let $EE^*_{\Phi_k}$ denote the optimal intermediate user EE of user $k$ when its current set of active links is $\Phi_k$, and then the value of $EE^*_{\Phi_k}$ can be obtained by (\ref{eq14}) and (\ref{eq15}). The next theorem provides a general condition for determining whether an arbitrary link should be scheduled.
\begin{theorem}\label{user}
 For any link $i\notin \Phi_k$, if $ EE^*_{\Phi_k}\leq ee_{k,i}^{*} $, then there must be $EE^*_{\Phi_k}\leq EE^*_{\Phi_k \bigcup \{(k,i)\}}\leq ee_{k,i}^{*}$, and  the  link $i$ should be activated and added to  $\Phi_k$; else if $ EE^*_{\Phi_k}> ee_{k,i}^{*}$, then there must be $EE^*_{\Phi_k}> EE^*_{\Phi_k \bigcup \{(k,i)\}}> ee_{k,i}^{*}$, and  the  link $i$ should not be activated and added to  $\Phi_k$.
\end{theorem}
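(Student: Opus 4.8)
The plan is to exploit the fact that, when link $i$ is switched on, both the numerator and the denominator of $EE_k$ split additively across links, so the enlarged objective becomes a genuine ``mediant'' of the two smaller fractional objectives. Write $EE^*_{\Phi_k}=N^\star_{\Phi_k}/M^\star_{\Phi_k}$, where $N_{\Phi_k}(\bm p)=\omega_k\sum_{j\in\Phi_k}B\log_2(1+p_{k,j}g_{k,j}/(\Gamma\sigma^2))$ and $M_{\Phi_k}(\bm p)=\sum_{j\in\Phi_k}p_{k,j}/\xi+|\Phi_k|(P_{{\rm{dyn}},k}+P_{{\rm{dyn}},0})+P_{{\rm{sta}},k}$, and write $ee^*_{k,i}=n^\star_i/d^\star_i$ with $n_i(p_{k,i})=\omega_kB\log_2(1+p_{k,i}g_{k,i}/(\Gamma\sigma^2))$ and $d_i(p_{k,i})=p_{k,i}/\xi+P_{{\rm{dyn}},k}+P_{{\rm{dyn}},0}$. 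Since $\mathcal{I}(n_k^o)$ stays $1$ and $n_k^o$ grows by exactly one when link $i$ is activated, one has $N_{\Phi_k\cup\{(k,i)\}}=N_{\Phi_k}+n_i$ and $M_{\Phi_k\cup\{(k,i)\}}=M_{\Phi_k}+d_i$; hence for every feasible power vector the value of the enlarged objective equals $(N_{\Phi_k}(\bm p)+n_i(p_{k,i}))/(M_{\Phi_k}(\bm p)+d_i(p_{k,i}))$.

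Step one is to establish the non-directed sandwich $\min\{EE^*_{\Phi_k},ee^*_{k,i}\}\le EE^*_{\Phi_k\cup\{(k,i)\}}\le\max\{EE^*_{\Phi_k},ee^*_{k,i}\}$. For the upper bound, use that $EE^*_{\Phi_k}$ and $ee^*_{k,i}$ are the maximal values of their ratios, so $N_{\Phi_k}(\bm p)\le EE^*_{\Phi_k}\,M_{\Phi_k}(\bm p)$ and $n_i(p_{k,i})\le ee^*_{k,i}\,d_i(p_{k,i})$ for all feasible powers; adding these and dividing by $M_{\Phi_k}(\bm p)+d_i(p_{k,i})>0$ shows every attainable value of the enlarged objective is a weighted average of $EE^*_{\Phi_k}$ and $ee^*_{k,i}$, hence at most their maximum, and taking the maximum over $\bm p$ gives the bound. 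For the lower bound, evaluate the enlarged objective at the concatenation of the maximizer of $EE^*_{\Phi_k}$ and the maximizer of $ee^*_{k,i}$ (both are attained, since the feasible power box is compact and the denominators are bounded below by the strictly positive circuit-power constants); this value equals exactly $(M^\star_{\Phi_k}EE^*_{\Phi_k}+d^\star_i ee^*_{k,i})/(M^\star_{\Phi_k}+d^\star_i)$, a weighted average of the two, hence at least their minimum, and $EE^*_{\Phi_k\cup\{(k,i)\}}$ dominates it.

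Step two is to upgrade to the two directed statements. If $EE^*_{\Phi_k}\le ee^*_{k,i}$, the upper half of the sandwich already gives $EE^*_{\Phi_k\cup\{(k,i)\}}\le ee^*_{k,i}$, and the weighted-average value used in the lower bound, having strictly positive weight on $ee^*_{k,i}\ge EE^*_{\Phi_k}$, is at least $EE^*_{\Phi_k}$, so $EE^*_{\Phi_k\cup\{(k,i)\}}\ge EE^*_{\Phi_k}$ and link $i$ should be added. If $EE^*_{\Phi_k}>ee^*_{k,i}$, the same weighted average now has strictly positive weight $M^\star_{\Phi_k}/(M^\star_{\Phi_k}+d^\star_i)>0$ on the strictly larger quantity $EE^*_{\Phi_k}$, so it exceeds $ee^*_{k,i}$ and hence $EE^*_{\Phi_k\cup\{(k,i)\}}>ee^*_{k,i}$; and for the strict upper inequality, evaluate at the true maximizer $\hat{\bm p}$ of $EE^*_{\Phi_k\cup\{(k,i)\}}$ and combine $N_{\Phi_k}(\hat{\bm p})\le EE^*_{\Phi_k}M_{\Phi_k}(\hat{\bm p})$ with $n_i(\hat p_{k,i})\le ee^*_{k,i}d_i(\hat p_{k,i})<EE^*_{\Phi_k}d_i(\hat p_{k,i})$ --- strict because $d_i(\hat p_{k,i})\ge P_{{\rm{dyn}},k}+P_{{\rm{dyn}},0}>0$ --- to conclude $EE^*_{\Phi_k\cup\{(k,i)\}}<EE^*_{\Phi_k}$, i.e., link $i$ should not be added.

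The only genuine obstacle is conceptual: one must not insist on a common power allocation for the problems before and after adding link $i$, since their optimizers differ. The resolution is that the two upper bounds need only the pointwise ``numerator $\le$ optimal ratio $\times$ denominator'' inequalities, while the two lower bounds need only one judiciously chosen feasible point, namely the concatenation of the two separate optimizers; everything then reduces to the elementary mediant inequality. Strictness in the second case comes entirely from the circuit-power constants $P_{{\rm{dyn}},k}$, $P_{{\rm{dyn}},0}$, $P_{{\rm{sta}},k}$ being positive, which keeps all the weights in the weighted averages bounded away from $0$. The base case $\Phi_k\neq\emptyset$, ensuring $M^\star_{\Phi_k}>0$, is guaranteed in the algorithm by initializing $\Phi_k$ with the single most energy-efficient link.
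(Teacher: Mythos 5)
Your proof is correct and follows essentially the same route as the paper's Appendix A: both establish the mediant sandwich $\min\{EE^*_{\Phi_k},ee^*_{k,i}\}\leq EE^*_{\Phi_k\bigcup\{(k,i)\}}\leq\max\{EE^*_{\Phi_k},ee^*_{k,i}\}$ by evaluating the enlarged objective at the concatenation of the two separate maximizers for the lower bound and by component-wise optimality plus the mediant inequality for the upper bound. The only difference is that the paper stops at the non-strict sandwich and asserts the theorem ``can be easily proved'' from it, whereas you explicitly supply the strictness argument via the positivity of $P_{{\rm{dyn}},k}+P_{{\rm{dyn}},0}$, which is a welcome completion rather than a deviation.
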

\begin{proof}
  Please see Appendix \ref{apdix1}.
\end{proof}

The interpretation is also obvious: the new link $i$ should have a better utilization of the power than  its user.
In what follows, we introduce how to obtain the optimal user EE based on the link EE, and the details of this procedure are summarized in line 1-14 of  Algorithm 1.

Sort all  radio links of user $k$ according to their link EE $ee^*_{k,i}$ in descending order, i.e., $ee^*_{k,1} \geq ee^*_{k,2}\geq... \geq ee^*_{k,n_k}$, and set the initial $\Phi_k={\O}$. Then  we successively take one link from the order and judge whether it should be added to $\Phi_k$. Until some link is determined not  to be activated or all links are activated, then based on the current $\Phi_k$, we can obtain the optimal user EE.

\begin{remark}
The optimality of the proposed procedure for maximizing the user EE is ensured by the ordering of the link EE as well as the conclusion of Theorem \ref{user}.
This idea opens up a new way to address the fractional-form EE maximization problem.
\end{remark}
\subsection{User Scheduling and Link Adaptation}
In this subsection, we show how to solve the original problem (\ref{eq10}) based on the link EE and the user EE.
 For the explanation convenience, we first introduce two auxiliary sets. Denote  $\Phi$  as the set of active links of all  users, i.e., $\Phi=\{(k,i)\,|\,p_{k,i}>0, \forall i, k \}$, with its optimum denoted as $\Phi^*$. Denote $U$ as the set of scheduled users  which have at least  one active link belonging to set $\Phi$, i.e., $U=\{\,k\,|\,(k,i)\in \Phi, \forall k, i\}$. Apparently, $U$ can be sufficiently determined by $\Phi$.

Given the set of overall active links $\Phi$, and accordingly the set of scheduled users $U$, then $n_k^o$ can be readily calculated  and
problem (\ref{eq10})
is simplified into the following problem
\begin{align}\label{eq141}
\mathop {\max }\limits_{\bm{p}} &~~\frac{\sum_{k\in U} \omega_k\sum_{i \in \Phi} B\log_{2}\left(1+\frac{p_{k,i}g_{k,i}}{\Gamma\sigma^2}\right)}
{\sum_{k\in U} \left( \frac{\sum_{i\in \Phi} p_{k,i}}{\xi}+P_{Ck}(n_k^o)+ n_k^oP_{{\rm{dyn}},0}\right)+  P_{{\rm{sta}},0}}  \nonumber \\
\text{s.t.}\,&~~ 0< p_{k,i}\leq P^{k,i}_{\mathop{\max}}, ~~~ 1\leq k \leq K,  i\in \Phi.
\end{align}
\begin{algorithm}[!hbpt]
 \caption{Energy-Efficient Scheduling Algorithm}
  \begin{algorithmic}[1]
    \item
 {\bf{for}}  $k=1:K$\\
          ~~~Compute  $ee^*_{k,i}$   for all $i\in \mathcal{M}_k$, by (\ref{eq11})  and (\ref{eq13});  \\
          ~~~~Sort all  links of user $k$ in  descending order of  $ee^*_{k,i}$,\textcolor{white}{xxxx}  \textcolor{white}{xx}i.e., $ee^*_{k,1} \geq ee^*_{k,2}\geq... \geq ee^*_{k,n_k}$;  \\ \nonumber
          ~~~Set  $\Phi_k={\O}$ and $EE_{\Phi_k}^{*}=0$; \\
          ~~~{\bf{for}}  $i=1:n_k$\\
         ~~ ~\ \ \ \ \ {\bf{if}}   \  \  $EE_{\Phi_k}^{*}\leqslant ee^*_{k,i}$ \ \ {\bf{do}}    \\
         ~~  ~\ \ \ \ \ \ \ \quad  $\Phi_k={\Phi_k\bigcup \{(k,i)\}}$ ;           \\
          ~~   ~\ \ \ \ \ \  \ \quad Compute $p^*_{k,i}$  and $EE^*_{\Phi_k}$ by (\ref{eq14}) and (\ref{eq15}); \\
         ~~     ~\ \ \ \ \ {\bf{else}}       $EE_{\Phi_k}^{*}> ee^*_{k,i}$    \\
         ~~     ~\ \ \ \ \ \ \ \quad   ${\Phi_k}^{*} = \Phi_k$;\\
         ~~     ~\ \ \ \ \ \ \ \quad   $EE_{\Phi^*_k}^{*} = EE_{\Phi_k}^{*}$; \ \ {\bf{return}} \\
         ~~     ~\ \ \ \ \ {\bf{end} }\\
          ~~~{\bf{end} }\\
                    {\bf{end} }\\
                    Sort all users (include both real users and virtual users )  in descending order of $EE_{\Phi^*_k}^{*}$, i.e., $EE_{\Phi^*_1}^{*} \geq EE_{\Phi^*_2}^{*}\geq,...,\geq EE_{\Phi^*_L}^{*}$; \\
Set  $\Phi={\O}$, $U={\O}$, and $EE_{\Phi}^{*}=0$; \\
{\bf{for}}  $k=1:L$\\
          ~~~~~{\bf{if}}   \  \  $EE_{\Phi}^{*}\leqslant EE_{\Phi^*_k}^{*}$ \ \ {\bf{do}}    \\
         ~~     ~\ \ \ \ \ $\Phi={\Phi\bigcup \Phi^*_k}$  and $U=U\bigcup \{k\}$ ;           \\
         ~~     ~\ \ \ \ \ Obtain $p^*_{k,i}$  and $EE^*_{\Phi}$ by solving problem (\ref{eq141}); \\
          ~~~~~{\bf{else}}       $EE_{\Phi}^{*}> EE_{\Phi^*_k}^{*}$   \\
         ~~     ~\ \ \ \ \ $\Phi^{*}=\Phi$;\\
         ~~     ~\ \ \ \ \ $EE_{\Phi^{*}}^{*}= \ EE_{\Phi}^{*}$;   \ \ {\bf{return}} \\
          ~~~~~{\bf{end} }\\
{\bf{end} }
\end{algorithmic}\label{alg1}
\end{algorithm}
 Obviously, problem (\ref{eq141}) can be verified  as a standard quasiconcave optimization problem and thereby can be readily solved as (\ref{eq14}). Then our task is transformed to find the scheduled users and its corresponding active links.
Recall that in obtaining the optimal user EE, some links may not  be activated and for all inactive links, use $(k', i')$ to denote them. Then we define each inactive link, say link $i'$ of user $k'$ as a virtual user $\ell$ just like the real users in the system, and  let $\{(k',i')\}=\Phi^*_{\ell}$.  Therefore, the EE of this virtual user $\ell$ is exactly the  EE of  link $i$ of user $k$, i.e.,
$EE^*_{\Phi_{\ell}}=ee^*_{k,i}$.  In the rest, unless specified otherwise,  term ``user" refers to both real users and virtual users. The difference between the real user and the virtual user is that each real user may contain several links and its circuit power includes the static user scheduling power  $P_{{\rm{sta}}, k}$ as well as the link-dependent power
 $ P_{{\rm{dyn}}, k}+P_{{\rm{dyn}},0}$,
while each virtual user only contain one link and its circuit power thereby is given by
 $ P_{{\rm{dyn}},k}+P_{{\rm{dyn}},0}$.

We first sort all users in descending order according to the user EE $EE_{\Phi^*_k}^{*}$, i.e., $EE_{\Phi^*_1}^{*} \geq EE_{\Phi^*_2}^{*}\geq,...,\geq EE_{\Phi^*_L}^{*}$, where $L$ is the overall number of real users and virtual users. Then, we have the following lemma to characterize a property of the order.
\begin{lemma}
Assume that the virtual user $\ell$ is derived from the link $i'$ of the real user $k$. Following the descending order of the user EE,
the order index  of this virtual user $\ell$   must be  larger than that of its associated  real user $k$.
\end{lemma}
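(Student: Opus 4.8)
The plan is to reduce the statement to a single numerical inequality, namely that the optimal user EE of the real user $k$ strictly exceeds the EE of any virtual user spawned by $k$. Once this is in hand, the claim about order indices is immediate: all users (real and virtual) are sorted in non-increasing order of their EE, so a user with strictly smaller EE necessarily receives a larger order index.

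First I would unwind the definition of a virtual user. By construction, the virtual user $\ell$ corresponds to a link $i'\in\mathcal{M}_k$ that is \emph{not} activated by the inner greedy loop (lines 1--14), and its assigned EE is exactly $EE^*_{\Phi_\ell}=ee^*_{k,i'}$. Hence it suffices to show $ee^*_{k,i'}<EE^*_{\Phi^*_k}$.

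Next I would exploit the structure of the inner loop together with Theorem~\ref{user}. The links of user $k$ are examined in descending order of their link EE, $ee^*_{k,1}\ge ee^*_{k,2}\ge\cdots\ge ee^*_{k,n_k}$, and the loop terminates at the first index $j$ for which the rejection branch triggers, i.e.\ $EE^*_{\Phi_k}>ee^*_{k,j}$; at that instant the current set is already $\Phi^*_k$ and $EE^*_{\Phi_k}=EE^*_{\Phi^*_k}$. Every inactive link of user $k$ — in particular $i'$ — has sorted index at least $j$, so $ee^*_{k,i'}\le ee^*_{k,j}$. Chaining the two inequalities yields $EE^*_{\Phi^*_k}>ee^*_{k,j}\ge ee^*_{k,i'}=EE^*_{\Phi_\ell}$, and then the position in the sorted list gives the claimed order-index inequality.

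The main obstacle (really the only delicate point) is that the bound must hold for \emph{every} inactive link of user $k$, not just the first one the loop rejects; this is precisely where the descending ordering of the link EEs is indispensable, combined with the monotonicity guaranteed by Theorem~\ref{user} that, once a link is dropped, the intermediate user EE does not subsequently rise above $EE^*_{\Phi^*_k}$. I would also note in passing that the inequality is strict (the rejection test is a strict inequality, and $ee^*_{k,i}>0$ at the optimal link EE), so no tie-breaking convention is needed, and that the lemma is vacuously true for any real user all of whose assigned links are activated.
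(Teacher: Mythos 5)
Your proof is correct and follows essentially the same route as the paper: both reduce the claim to the strict inequality $EE^*_{\Phi^*_k}>ee^*_{k,i'}=EE^*_{\Phi_\ell}$ for every inactive link $i'$ and then read off the order index from the descending sort. You actually supply a detail the paper's one-line proof glosses over — namely that the inequality holds for \emph{all} inactive links (not just the first rejected one) because the links are examined in descending order of $ee^*_{k,i}$ and the loop halts at the first rejection — which is a worthwhile elaboration but not a different argument.
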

\begin{proof}
According to the user EE, we have $EE^*_{\Phi^*_k}> ee_{k,i'}^{*}$, i.e, $ EE^*_{\Phi^*_k}> EE^*_{\Phi_{\ell}}$.  Therefore, when  they are mixed together to specify the order, the virtual user $\ell$ (inactive link) must be ranked after its  corresponding real user $k$.
\end{proof}

This lemma guarantees that those virtual users (inactive links) of real user $k$ must be less likely to be active in the system EE compared with the user $k$ (the active links in the user EE), otherwise it may lead to the case that some link is scheduled finally in the system, but its associated real user is not scheduled, which contradicts the reality.

 In the following, we explore the special structures of the system EE, the user EE and the link EE,  and show how to  obtain the optimal set $\Phi^*$.
In each round, we add one user to the set $U$  following the order and  add all its active links in $\Phi^*_k$ to $\Phi$, respectively.
Then  based on $\Phi$,   the optimal system EE $EE_{\Phi}^{*}$ can be calculated as  (\ref{eq141}). By the following theorem, we obtain the maximum system  EE of problem (\ref{eq10}).
\begin{theorem}\label{system}
 1) For any $\Phi_k^*\notin \Phi$, If $ EE^*_{\Phi}\leq EE_{\Phi^*_k}^{*}$, then there must be $EE^*_{\Phi}\leq EE^*_{\Phi \bigcup \Phi^*_k}\leq EE_{\Phi^*_k}^{*}$;
 else if  $ EE^*_{\Phi}> EE_{\Phi^*_k}^{*}$, then there must be $ EE^*_{\Phi}> EE^*_{\Phi\bigcup \Phi^*_k}>  EE_{\Phi^*_k}^{*}$;
2) If any user $k$ is scheduled, then all active links in terms of the optimal user EE will also be activated in maximizing system EE.
\end{theorem}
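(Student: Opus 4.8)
The plan is to prove part~1) by the same fractional-programming argument used for Theorem~\ref{user}, simply lifted one level up in the hierarchy (system EE versus user EE, in place of user EE versus link EE), and then to deduce part~2) from it together with one extra observation: a scheduled user cannot drag the system EE below its own optimal user EE.

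For part~1), write the optimal value of (\ref{eq141}) with active-link set $\Phi$ as $EE^{*}_{\Phi}=\max_{\bm p}A_{\Phi}(\bm p)/B_{\Phi}(\bm p)$, where the single static term $P_{{\rm sta},0}$ is absorbed into $B_{\Phi}$. Because user $k$ and the users already in $U$ occupy disjoint radio links, on the enlarged set $\Phi\cup\Phi^{*}_{k}$ the numerator and denominator split additively, $A_{\Phi}(\bm p_{U})+A_{k}(\bm p_{k})$ and $B_{\Phi}(\bm p_{U})+B_{k}(\bm p_{k})$, and the new block $A_{k}/B_{k}$ is \emph{precisely} the user-EE objective of user $k$ restricted to the active set $\Phi^{*}_{k}$ --- this uses $P_{Ck}(n^{o}_{k})=n^{o}_{k}P_{{\rm dyn},k}+P_{{\rm sta},k}$ for a scheduled user, so that the per-user block of the system denominator coincides with the user-EE denominator, including its single copy of $P_{{\rm sta},k}$. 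Hence for any feasible $\bm p$ we have $A_{\Phi}(\bm p_{U})/B_{\Phi}(\bm p_{U})\le EE^{*}_{\Phi}$ and $A_{k}(\bm p_{k})/B_{k}(\bm p_{k})\le EE^{*}_{\Phi^{*}_{k}}$, so the mediant inequality bounds the enlarged objective by $\max\{EE^{*}_{\Phi},EE^{*}_{\Phi^{*}_{k}}\}$; maximizing over $\bm p$ gives $EE^{*}_{\Phi\cup\Phi^{*}_{k}}\le\max\{EE^{*}_{\Phi},EE^{*}_{\Phi^{*}_{k}}\}$. For the reverse direction, evaluate the enlarged objective at the $\Phi$-optimal allocation for the users in $U$ paired with the user-EE-optimal allocation on $\Phi^{*}_{k}$ (feasible, since every link of $\Phi^{*}_{k}$ carries strictly positive power): the value is a mediant of $EE^{*}_{\Phi}$ and $EE^{*}_{\Phi^{*}_{k}}$, whence $EE^{*}_{\Phi\cup\Phi^{*}_{k}}\ge\min\{EE^{*}_{\Phi},EE^{*}_{\Phi^{*}_{k}}\}$. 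Splitting into the two orderings of $EE^{*}_{\Phi}$ and $EE^{*}_{\Phi^{*}_{k}}$ yields the two inequality chains; strictness when $EE^{*}_{\Phi}>EE^{*}_{\Phi^{*}_{k}}$ holds because a mediant of two \emph{unequal} ratios is strictly interior --- if $EE^{*}_{\Phi\cup\Phi^{*}_{k}}$ reached an endpoint, the additive split would force one block to beat its own optimum.

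For part~2), fix a globally optimal solution of (\ref{eq10}) with scheduled set $U^{*}$, active-link set $\Phi^{*}$ and optimal powers, and pick any $k\in U^{*}$. First, $EE^{*}_{\Phi^{*}}\le EE^{*}_{\Phi^{*}_{k}}$: the contribution ratio of user $k$ (its weighted rate over its own power, which \emph{does} include $P_{{\rm sta},k}$) is a feasible user-EE value and hence $\le EE^{*}_{\Phi^{*}_{k}}$, and if $EE^{*}_{\Phi^{*}}$ exceeded $EE^{*}_{\Phi^{*}_{k}}$ this ratio would lie below the system average, so deleting user $k$ while keeping everything else would remain feasible and \emph{strictly raise} the system EE, contradicting optimality (when $U^{*}=\{k\}$ this is immediate, as the denominator of $EE^{*}_{\Phi^{*}}$ then exceeds that of the ratio by $P_{{\rm sta},0}>0$). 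Next, by Theorem~\ref{user} applied to the last link admitted into $\Phi^{*}_{k}$ by the descending-order procedure, every $i\in\Phi^{*}_{k}$ has $ee^{*}_{k,i}\ge EE^{*}_{\Phi^{*}_{k}}\ge EE^{*}_{\Phi^{*}}$. Finally, if some link of $\Phi^{*}_{k}$ were inactive in $\Phi^{*}$, append all of $\Phi^{*}_{k}\setminus\Phi^{*}$ to $\Phi^{*}$ at their link-EE-optimal powers; no new $P_{{\rm sta},k}$ is charged because user $k$ is already scheduled, so this only grafts blocks of ratio $ee^{*}_{k,i}\ge EE^{*}_{\Phi^{*}}$ onto the already-counted numerator and denominator, and the mediant inequality makes the resulting system EE $\ge EE^{*}_{\Phi^{*}}$. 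As $\Phi^{*}$ is optimal, an optimal solution containing $\Phi^{*}_{k}$ exists, which is statement~2).

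The step I expect to be the main obstacle is the additive split in part~1): one has to check that $P_{{\rm sta},0}$ and $P_{{\rm sta},k}$ are each counted exactly once and land on the right side of the split, and --- most importantly --- that the per-user block of the system denominator equals the user-EE denominator \emph{only} by virtue of the user being scheduled, which is exactly what makes $\Phi^{*}_{k}$ (rather than a proper subset of it) the correct object to append, and which keeps part~2) consistent with the ordering lemma (a virtual user ranks after its real user). Once this bookkeeping is settled, both parts reduce to the elementary mediant inequality exactly as in the proof of Theorem~\ref{user}; the only genuinely new ingredient is the deletion argument establishing $EE^{*}_{\Phi^{*}}\le EE^{*}_{\Phi^{*}_{k}}$.
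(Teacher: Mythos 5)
Your proposal is correct and follows essentially the same route as the paper: part~1) is exactly the mediant/additive-decomposition argument of Theorem~\ref{user} lifted one level up (which the paper omits as ``an extension of Theorem~\ref{user}''), and part~2) is the same contradiction built from $ee^{*}_{k,i}\geq EE^{*}_{\Phi^{*}_{k}}\geq EE^{*}_{\Phi^{*}}$ for all $i\in\Phi^{*}_{k}$ followed by re-appending the missing links as virtual users via part~1). The only real divergence is the sub-step establishing $EE^{*}_{\Phi^{*}}\leq EE^{*}_{\Phi^{*}_{k}}$: you derive it by a deletion argument from global optimality of $\Phi^{*}$, whereas the paper simply invokes the part-1) scheduling criterion for a scheduled user --- your version is slightly more self-contained (and handles the $U^{*}=\{k\}$ corner case explicitly), but it amounts to the same inequality.
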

\begin{proof}
  Please see Appendix \ref{apdix2}.
\end{proof}

The first statement  suggests that in each round, the comparison result of $EE^*_{\Phi}$ and  $EE_{\Phi^*_k}^{*}$ is necessary and sufficient to determine whether the $k$th user can be scheduled to improve the system EE. While the second statement guarantees the optimality of the active links in maximizing the system  EE.
This theorem guarantees the optimality of the proposed method which exhibits the concept of divide-and-conquer following the EE of three levels. The process of method is summarized in  Algorithm 1 and it is easy to show that the complexity of the divide-and conquer approach overall has a linear complexity of the power control.
\subsection{Impact of Static Receiving  Power on User Scheduling}
The next theorem reveals the  relationship between the  user scheduling and the static receiving power.
\begin{theorem}\label{pr0}
1) The optimal number of scheduled users in maximizing the system EE is nondecreasing with the static receiving power $P_{{\rm{sta}},0}$;
2) When $P_{{\rm{sta}},0}$ is  negligible, i.e., $P_{{\rm{sta}},0}\rightarrow0$,  TDMA is optimal for energy-efficient transmission;
3) When $P_{{\rm{sta}},0}$ is sufficiently large, all users will be scheduled for energy-efficient transmission.
\end{theorem}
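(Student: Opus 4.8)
The plan is to exploit the divide‑and‑conquer structure established in Theorem \ref{system} and observe that $P_{{\rm{sta}},0}$ enters problem (\ref{eq141}) only through the denominator as an additive constant, so the whole scheduling decision reduces to comparisons of the form $EE^*_{\Phi}\lessgtr EE^*_{\Phi^*_k}$ along the fixed descending order of user EEs. Crucially, the individual quantities $ee^*_{k,i}$ and $EE^*_{\Phi^*_k}$ do \emph{not} depend on $P_{{\rm{sta}},0}$ at all (it appears neither in (\ref{eq11})–(\ref{eq13}) nor in the user‑level definition), so the sorted order of users is invariant in $P_{{\rm{sta}},0}$. Only the running intermediate system EE $EE^*_{\Phi}$ depends on it.

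For part 1, I would first show that for a fixed active set $\Phi$ (hence fixed $n_k^o$), the optimal system EE $EE^*_{\Phi}$ is strictly decreasing in $P_{{\rm{sta}},0}$: writing the numerator as $N(\bm p)$ and the denominator as $D(\bm p)+P_{{\rm{sta}},0}$, a larger additive constant shrinks the ratio pointwise, hence shrinks the optimum. Now run Algorithm 1 for two values $P_{{\rm{sta}},0}^{(1)}<P_{{\rm{sta}},0}^{(2)}$ along the \emph{same} user order. By induction on the round index $k$, I would show the active set produced under $P_{{\rm{sta}},0}^{(2)}$ always contains the one produced under $P_{{\rm{sta}},0}^{(1)}$: at each round the stopping test is $EE^*_{\Phi}>EE^*_{\Phi^*_k}$; since $EE^*_{\Phi}$ is smaller for the larger $P_{{\rm{sta}},0}$ while the threshold $EE^*_{\Phi^*_k}$ is unchanged, the algorithm is "less likely" to stop, so it admits at least as many users. (The inductive step needs the monotonicity of $EE^*_{\Phi\cup\Phi^*_k}$ in $\Phi$ supplied by Theorem \ref{system}(1) to carry the inclusion forward.) This yields that the optimal number of scheduled users is nondecreasing in $P_{{\rm{sta}},0}$.

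For part 2, let $P_{{\rm{sta}},0}\to 0$. I would argue that in this limit the system EE objective decouples: dropping the shared constant, the denominator becomes $\sum_{k\in U}\big(\tfrac{\sum_i p_{k,i}}{\xi}+P_{Ck}(n_k^o)+n_k^o P_{{\rm{dyn}},0}\big)$, i.e.\ a sum of per‑user terms, so the system EE is a weighted‑mediant of the individual user EEs and is therefore maximized by selecting the single user with the largest $EE^*_{\Phi^*_k}$ — equivalently, only the top user in the order survives the first stopping test. Serving exactly one user means TDMA is optimal. Concretely, after scheduling the first user $\Phi=\Phi^*_1$, one has $EE^*_{\Phi}=EE^*_{\Phi^*_1}\ge EE^*_{\Phi^*_2}$ with strict inequality generically, so round $2$ triggers the \textbf{else} branch and the algorithm returns.

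For part 3, let $P_{{\rm{sta}},0}\to\infty$. Here $EE^*_{\Phi}\to 0$ uniformly over any active set (numerator bounded, denominator $\to\infty$), so at every round the test $EE^*_{\Phi}\le EE^*_{\Phi^*_k}$ holds for every remaining user with $EE^*_{\Phi^*_k}>0$ (and all real users have positive EE since powers are positive). Hence no stopping test is ever triggered and Algorithm 1 schedules all users; I would make this quantitative by exhibiting a threshold on $P_{{\rm{sta}},0}$ above which $EE^*_{\Phi}$ after $L-1$ rounds is still below $\min_k EE^*_{\Phi^*_k}$, using that the denominator of $EE^*_{\Phi}$ is at least $P_{{\rm{sta}},0}$.

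I expect the main obstacle to be the inductive step in part 1: establishing that the \emph{sets} produced by the two runs remain nested round‑by‑round — not merely that the cardinalities compare — because once the two runs diverge, the intermediate $\Phi$'s differ and one must invoke Theorem \ref{system}(1) carefully to show the larger‑$P_{{\rm{sta}},0}$ run never "falls behind." A clean way around it is to note that the order of users is fixed and independent of $P_{{\rm{sta}},0}$, so the scheduled set is always a \emph{prefix} of that order; then nestedness of prefixes follows immediately from comparing the two prefix lengths, and only the length monotonicity needs the decreasing‑$EE^*_{\Phi}$ argument above.
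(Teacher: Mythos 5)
Your proposal is correct and follows the same route the paper sketches: the link EEs and user EEs are independent of $P_{{\rm{sta}},0}$ while the running system EE $EE^*_{\Phi}$ is decreasing in it, so along the fixed descending user order the scheduled prefix can only lengthen as $P_{{\rm{sta}},0}$ grows, with the two limiting regimes giving a single scheduled user (TDMA) and all users, respectively. The paper itself offers only a two-line sketch and defers details to a journal version, so your write-up --- in particular the observation that the scheduled set is always a prefix of a $P_{{\rm{sta}},0}$-invariant order (which makes the round-by-round induction immediate) and the mediant argument for part 2 --- supplies exactly the missing details and contains no gap.
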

\begin{proof}
Due to the space limitation, we only provide a sketch of the proof here.  It is easy to show that the system EE is decreasing with the  static receiving power $P_{{\rm{sta}},0}$. Then, from Theorem \ref{system}, we can show that less users would be scheduled for a higher system EE.  A  more detailed  proof  will be given  in the journal version of this paper.
\end{proof}

The intuition is that  when $P_{{\rm{sta}},0}$ is larger, the additional power consumption brought from scheduling users is less dominant, which makes it more effective to achieve higher EE.
If there is no additional power consumption for operating systems,  i.e., $P_{{\rm{sta}},0}=0$,  the optimal energy-efficient strategy is only to schedule the ``best'' user where the best is in terms of the user EE. It has the similar interpretation as that of the throughput maximization problem in TDMA systems: only the user with the best channel gain will be scheduled.
\begin{table}[!t]
\centering
\caption{\label{table1}  SYSTEM PARAMETERS} \label{tb1}
\renewcommand\arraystretch{1}
\begin{tabular}{|c|c|}
\hline
{Parameter} & {Description} \\
\hline
Carrier frequency &      $2$ GHz     \\
\hline
Bandwidth of each radio  link, $B$&     $ 15$ kHz      \\
\hline
Maximal allowed transmit power, $P^{k,i}_{\mathop{\max}}$&     $ 25$ dBm      \\
\hline
Static circuit  power of the AP, $P_{{\rm{sta}},0}$  &      $5000$ mW\\
\hline
Link dependent power of the AP, $P_{{\rm{dyn}},0}$&      $45$ mW\\
\hline
Static circuit power of  user $k$, $P_{{\rm{sta}}, k}$ &    $100$ mW \\
\hline
Link dependent power of  user $k$, $P_{{\rm{dyn}}, k}$&      $5\-- 30$ mW\\
\hline
Power density of thermal noise variance  &    $-174$ dBm/Hz \\
\hline
Power amplifier efficiency, $\xi$ &    $0.38$    \\
\hline
Cell radius, $r$&     $ 1000$ m      \\
\hline
Path loss model  &    Okumura-Hata      \\
\hline
Penetration loss  &    $20$ dB    \\
\hline
Lognormal shadowing  &    $8$ dB    \\
\hline
Fading  &    Rayleigh flat fading  \\
\hline
\end{tabular}
\end{table}
 \begin{figure}[!th]
\centering
\includegraphics[width=3.5in ]{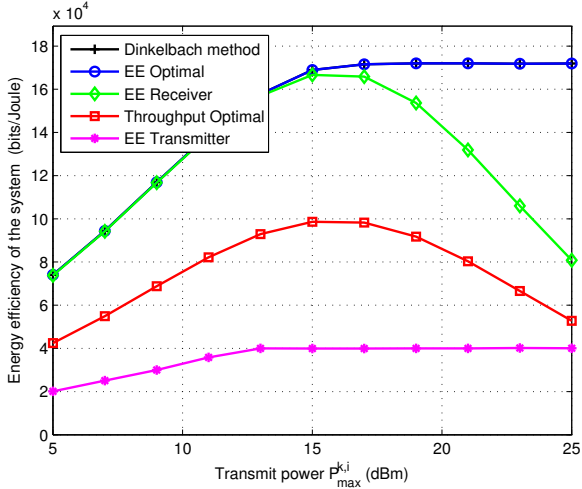}
\caption{The system EE  versus the transmit power.}\label{transmit_power}
\end{figure}
From Theorem \ref{pr0}, it is also interesting to note that  the number of users scheduled can not be  guaranteed although  the weights have been imposed on users, especially for the case with low static receiving power.
\section{Numerical Results }

 \begin{figure}[!th]
\centering
\includegraphics[width=3.5in ]{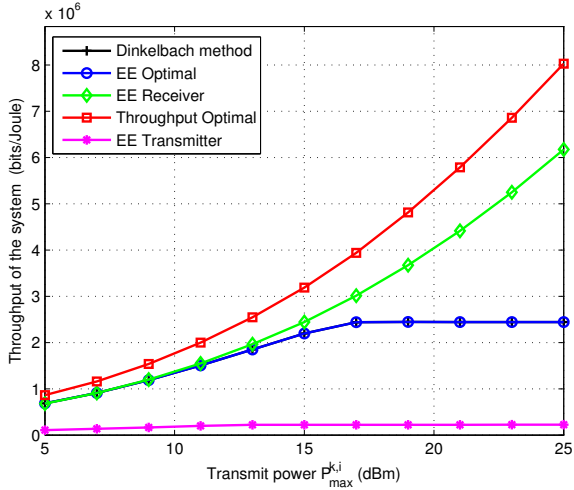}
\caption{The system throughput  versus the transmit power.}\label{throughput}
\end{figure}
\begin{figure}[!th]
\centering
\includegraphics[width=3.5in ]{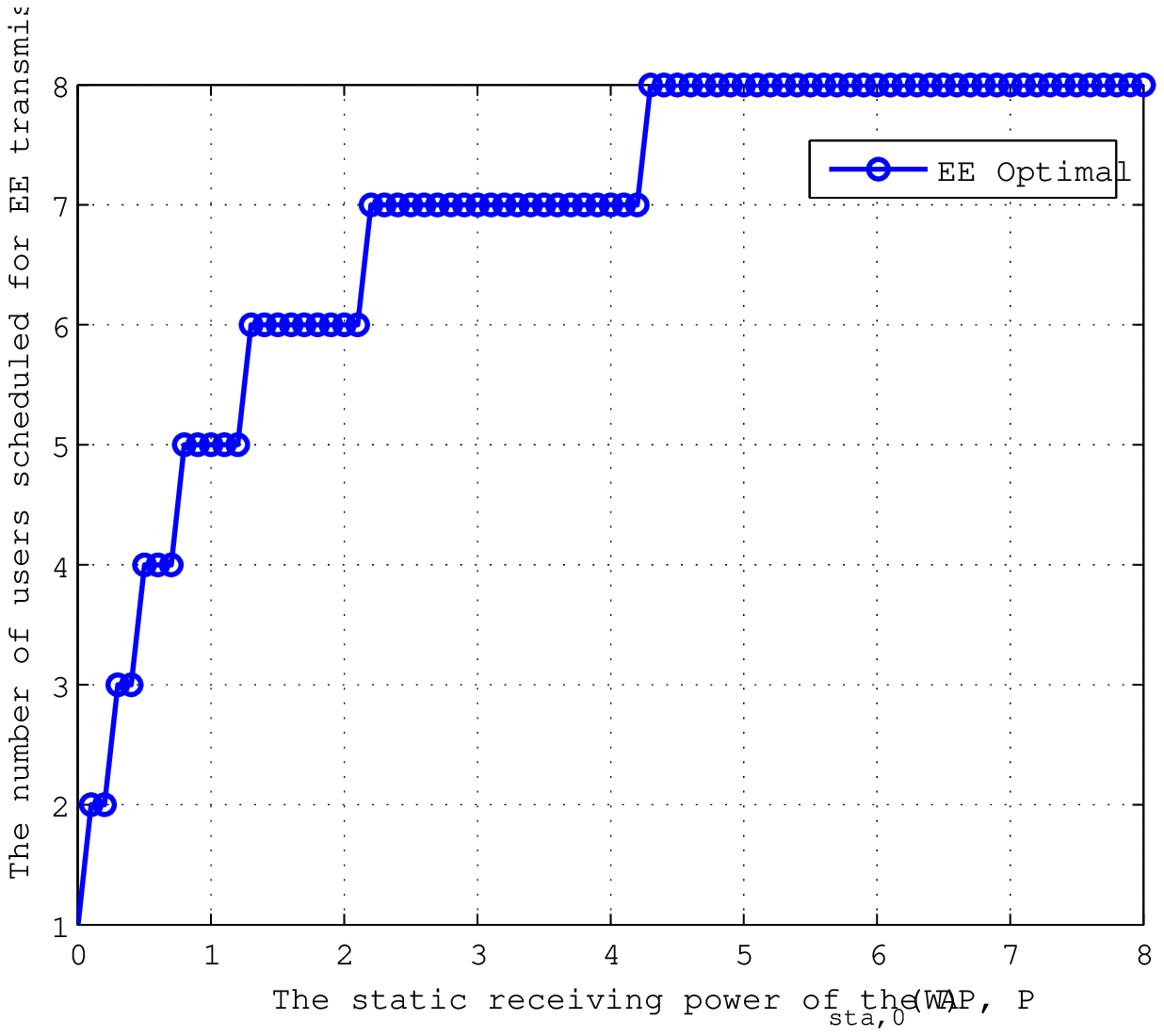}
\caption{The number of scheduled users  versus the static receiving power.}\label{user_number}
\end{figure}

In this section, we provide simulation results to validate our theoretical findings and  demonstrate the effectiveness  proposed  methods. There are eight equally weighted users in the system and each user is configured with twenty radio links. Without loss of generality,  $P^{k,i}_{\mathop{\max}}$ is assumed the same for all users and $\Gamma$ is assumed as 1. Other system parameters are listed in Table I  according to \cite{zhang2010joint,xu2013energy}  unless specified otherwise.

In \figref{transmit_power}, we compare the EE of the following methods:  1) Dinkelbach method: the existing optimal method \cite{ng2012energy2}; 2) EE Optimal: joint Tx and Rx optimization; 3) EE  Transmitter: based on the Tx side optimization \cite{qing1}; 4) EE  Receiver: based on the Rx side optimization  \cite{qing1}; 5) Throughput Optimal: based on the throughput maximization.  In \figref{transmit_power}, we can first observe that our proposed method performs the same as the Dinkelbach method, which demonstrates its optimality. Moreover, as the transmit power increases,  the performance of the EE Optimal scheme first increases and then approaches a  constant because of its energy-efficient nature, while those of the Throughput Optimal scheme and the EE Receiver scheme first increase and then decrease due to their greedy use of power.
 It is also interesting to note that the EE Receiver scheme approaches the EE Optimal scheme in the  low transmit power regime while it is more close to the Throughput Optimal scheme in the high transmit power regime.
A similar phenomenon can also be found in \figref{throughput} in terms of the system throughput. Moreover, the EE Transmitter scheme results in both low EE and spectral efficiency due to the fact that  \emph{only one user is scheduled}, which has been theoretically shown in Section III-C.


 \figref{user_number} further demonstrates our theoretical findings in Theorem \ref{pr0} which
 characterizes the monotonicity of the number of users scheduled  with $P_{{\rm{sta}},0}$.  We observe that when $P_{{\rm{sta}},0}$ is negligible,  the optimal energy-efficient strategy is  to schedule only one user.  As $P_{{\rm{sta}},0}$ increases, more users are scheduled to improve the system EE through boosting the system throughput.



\section{Conclusions}
This paper  investigated the joint transmitter and receiver EE maximization problem in multi-radio networks. A holistic and dynamic power consumption model  was established for the considered system. Then, the EE maximization problem is directly addressed from the fractional perspective, which results in an linear divide-and-conquer approach. Moreover, we pointed out  that the static receiving power has an implicit interpretation for the optimal  number of scheduled users. In the extreme case when the static receiving power is negligible, TDMA is the optimal scheduling strategy. In order to meet the QoS in practice, we then extended the propose method to solve the problem  with  minimal user  data rate constraints, which exhibits good performance with linear complexity.



\appendices

\section{Proof of Theorem \ref{user}}\label{apdix1}
Denote ${p}_{k,i}^*$ as the optimal power  corresponding to $ee^*_{k,i}$ by  (\ref{eq11})  and (\ref{eq13}).
Also, denote $\hat{p}_{k,i}$ and $\check{p}_{k,i}$ as the optimal powers  corresponding to $EE^*_{\Phi_k\bigcup \{(k,i)\}}$ and $EE^*_{\Phi_k}$ by (\ref{eq141}), respectively. Let $S_k \triangleq \{p_{k,i}|0 \leq p_{k,i} \leq P^{k,i}_{\mathop{\max}}, \forall \,  i\in \mathcal{M}_k, k=1,...,K\}$ and  $P_{k,i}(p_{k,i})=\frac{\Theta_tp_{k,i}}{\xi}+\Theta_{\rm{t}} P_{{\rm{dyn}},k}$.
Then, we have the following
\begin{eqnarray}\label{eq39}
&&EE^*_{\Phi_k\bigcup \{(k,i)\}}   \nonumber \\
&=& \max_{\bm{p_k}\in S_k}\frac{\sum_{\ell=1}^i \omega_kr_{k,\ell}(p_{k,\ell})}{\sum_{\ell=1}^{i} P_{k,i}(p_{k,\ell})+P_{{\rm{sta}}, k}}   \nonumber  \\
&=& \frac{\sum_{\ell=1}^{i-1} \omega_kr_{k,\ell}(\hat{p}_{k,\ell})+\omega_kr_{k,i}(\hat{p}_{k,i})}{\sum_{\ell=1}^{i-1} P_{k,\ell}(\hat{p}_{k,\ell})+P_{{\rm{sta}}, k}+P_{k,i}(\hat{p}_{k,i})}    \nonumber  \\
&\geq& \frac{\sum_{\ell=1}^{i-1}  \omega_kr_{k,\ell}(\check{p}_{k,\ell})+ \omega_kr_{k,i}(p_{k,i}^*)}{\sum_{\ell=1}^{i-1} P_{k,\ell}(\check{p}_{k,\ell})+P_{{\rm{sta}}, k}+P_{k,i}({p}_{k,i}^*)}   \nonumber  \\
&\geq& \min\left\{\frac{\sum_{\ell=1}^{i-1}  \omega_kr_{k,\ell}(\check{p}_{k,\ell})}{\sum_{\ell=1}^{i-1} P_{k,\ell}(\check{p}_{k,\ell})+P_{{\rm{sta}}, k}},\frac{ \omega_kr_{k,i}({p}_{k,i}^*)}{P_{k,i}({p}_{k,i}^*)} \right\}    \nonumber \\
&=& \min\left\{EE^*_{\Phi_k}, ee^*_{k,i}\right\}.
\end{eqnarray}
 On the other hand,
 \begin{eqnarray}\label{eq391}
&&EE^*_{\Phi_k\bigcup \{(k,i)\}}     \nonumber  \\
&=& \frac{\sum_{\ell=1}^{i-1}  \omega_kr_{k,\ell}(\hat{p}_{k,\ell})+ \omega_kr_{k,i}(\hat{p}_{k,i})}{\sum_{\ell=1}^{i-1} P_{k,\ell}(\hat{p}_{k,\ell})+P_{{\rm{sta}}, k}+P_{k,i}(\hat{p}_{k,i})}   \nonumber  \\
&\leq& \max\left\{\frac{\sum_{\ell=1}^{i-1}  \omega_kr_{k,\ell}(\hat{p}_{k,\ell})}{\sum_{\ell=1}^{i-1} P_{k,\ell}(\hat{p}_{k,\ell})+P_{{\rm{sta}}, k}},\frac{ \omega_kr_{k,i}(\hat{p}_{k,i})}{P_{k,i}(\hat{p}_{k,i})}\right\}    \nonumber \\
&\leq& \max\left\{\frac{\sum_{\ell=1}^{i-1}  \omega_kr_{k,\ell}(\check{p}_{k,\ell})}{\sum_{\ell=1}^{i-1} P_{k,\ell}(\check{p}_{k,\ell})+P_{{\rm{sta}}, k}},\frac{ \omega_kr_{k,i}({p}_{k,i}^*)}{P_{k,i}({p}_{k,i}^*)}\right\}    \nonumber \\
&=& \max\left\{EE^*_{\Phi_k},  ee^*_{k,i}\right\}.
\end{eqnarray}
Based on (\ref{eq39}) and (\ref{eq391}), we have
\begin{equation}\label{keyneq}
 \min\left\{EE^*_{\Phi_k}, ee^*_{k,i}\right\} \leqslant EE^*_{\Phi_k\bigcup \{(k,i)\}}\leqslant \max\left\{EE^*_{\Phi_k},  ee^*_{k,i}\right\}.
\end{equation}
By (\ref{keyneq}), Theorem \ref{user} can be easily proved.
\section{Proof of Theorem \ref{system}}\label{apdix2}
The statement 1) in Theorem \ref{system} can be similarly proved by an extension of Theorem \ref{user}, thus we omit them for brevity.
We now prove 2) by contradiction. Assume that user  $k$ is scheduled,  but the link $i$, for $i\in \Phi_k^*$,  is not activated in maximizing the system EE, i.e., $(k,i)\notin \Phi^*$. In Theorem \ref{user}, we have shown that the sufficient and necessary condition of any $i\in \Phi_k^*$ is that $EE^*_{\Phi_k}\leq ee_{k,i}^{*}$. Thus, it follows that
\begin{align}\label{ap_eq1}
EE^*_{\Phi^*_k}\leq ee^*_{k,n^o_k} \leq ee^*_{k,i},  ~~ \forall i\in \Phi^*_k,
\end{align}
where $n^o_k$ also denotes the last link activated according to the link EE order since user $k$ overall has $n^o_k$ links activated.
On the other hand, since user $k$ is scheduled, we must have $EE^*_{\Phi}\leq EE_{\Phi^*_k}$  by Theorem \ref{system}. Combining with (\ref{ap_eq1}), it follows that
\begin{align}
EE^*_{\Phi^*}\leq EE^*_{\Phi^*_k} \leq ee^*_{k,n^o_k} \leq ee^*_{k,i}=EE^*_{\Phi_{\ell}},  ~~ \forall i\in \Phi^*_k,
\end{align}
where the virtual user expression is adopted, i.e.,$\{(k,i)\}=\Phi^*_{\ell}$.
According to 1) in Theorem \ref{system},  there must be
\begin{align}\label{t_eq2}
EE^*_{\Phi^*}\leq EE^*_{\Phi^* \bigcup \Phi^*_{\ell}}=EE^*_{\Phi^* \bigcup \{(k,i)\}}.
\end{align}
Therefore, from (\ref{t_eq2}),  we can conclude that scheduling  the link $i$ of user $k$ should be scheduled in maximizing the system EE, which contradicts the assumption that  $(k,i)\notin \Phi^*$.
\bibliographystyle{IEEEtran}
\bibliography{IEEEabrv,mybib}

\end{document}